\documentclass{ifacconf}

\usepackage{graphicx}
\usepackage{natbib}

\usepackage{cite}
\usepackage{amsmath,amssymb,amsfonts}
\usepackage{textcomp}

\usepackage{mathrsfs}
\usepackage{mathtools}
\usepackage{stmaryrd}
\usepackage{bm}
\usepackage{cite}
\usepackage{booktabs}
\usepackage{algorithm}
\usepackage[noend]{algpseudocode}
\usepackage{physics}
\usepackage{subfigure}

\algrenewcommand\algorithmicindent{1em}

\makeatletter
\renewcommand{\ALG@name}{Protocol}
\makeatother
\algrenewcommand\algorithmicdo{}

\newcommand{\asmref}[1]{Assumption~\ref{#1}}

\newcommand{\lemref}[1]{Lemma~\ref{#1}}
\newcommand{\thmref}[1]{Theorem~\ref{#1}}

\newcommand{\secref}[1]{Section~\ref{#1}}
\newcommand{\appref}[1]{Appendix~\ref{#1}}
\newcommand{\figref}[1]{Fig.~\ref{#1}}

\newcommand{\proref}[1]{Protocol~\ref{#1}}

\newcommand{\N}{\mathbb{N}}
\newcommand{\Z}{\mathbb{Z}}
\newcommand{\Q}{\mathbb{Q}}
\newcommand{\R}{\mathbb{R}}
\newcommand{\ZZ}[1]{\Z_{\langle #1 \rangle}}
\newcommand{\QQ}[2]{\Q_{\langle #1, #2 \rangle}}
\newcommand{\Ecd}{\mathsf{Ecd}}
\newcommand{\Dcd}{\mathsf{Dcd}}
\newcommand{\Share}{\mathsf{Share}}
\newcommand{\Reconst}{\mathsf{Reconst}}
\newcommand{\Mult}{\mathsf{Mult}}
\newcommand{\Trunc}{\mathsf{Trunc}}

\DeclarePairedDelimiter{\floor}{\lfloor}{\rfloor}
\DeclarePairedDelimiter{\round}{\lceil}{\rfloor}
\DeclarePairedDelimiter{\share}{\llbracket}{\rrbracket}

\begin{document}

\thispagestyle{empty}
\hspace{-4.5mm}
\fbox{
\begin{minipage}{\textwidth-5mm}\scriptsize
© 2026 the authors.
This work has been accepted to IFAC for publication under a Creative Commons Licence CC-BY-NC-ND.
\end{minipage}
}
\newpage
\setcounter{page}{0}

\begin{frontmatter}

\title{Experimental Examination of Secure Two-Party Controller Computation\thanksref{footnoteinfo}} 

\thanks[footnoteinfo]{This work was supported by AFOSR Award FA9550-25-1-0347 and DARPA Award HR0011-25-3-0210.}

\author[First]{Kaoru Teranishi} 
\author[Second]{Jihoon Suh} 
\author[Second]{Takashi Tanaka}

\address[First]{Department of Information and Physical Sciences, Graduate School of Information Science and Technology, The University of Osaka, Osaka, Japan (e-mail: k-teranishi@ist.osaka-u.ac.jp)}
\address[Second]{School of Aeronautics and Astronautics, Purdue University, West Lafayette, IN 47907 USA (e-mail: suh95@purdue.edu, tanaka16@purdue.edu)}

\begin{abstract}
A secure two-party computation protocol for running dynamic controllers over secret sharing has recently been proposed.
Unlike encrypted control schemes based on homomorphic encryption, this protocol enables operating dynamic controllers for an infinite time horizon without controller-state decryption, controller-state reset, or input re-encryption.
However, the two-party setting introduces additional online communication between the computing parties, which may hinder real-time feasibility.
In this study, we demonstrate the feasibility of the protocol through implementation on a commercial cloud platform with an inverted pendulum testbed.
Experimental results show that the proposed protocol successfully stabilized the pendulum despite the online communication overhead.
\end{abstract}

\begin{keyword}
Cyber-physical systems, encrypted control, multi-party computation, privacy, security
\end{keyword}

\end{frontmatter}

\section{Introduction}

Encrypted control has emerged as a promising framework for enhancing the security of networked control systems \citep{Kogiso2015-go,Darup2021-qq}.
Beyond secure communication, it enables secure computation without revealing controller inputs, outputs, or parameters even when the controller computation is outsourced to an untrusted third party, such as a public cloud.
This is achieved by leveraging homomorphic encryption, which allows arithmetic operations to be performed directly on encrypted data.

A major challenge in encrypted control is avoiding overflow in the encrypted controller states when implementing dynamic controllers \citep{Cheon2018-vr}.
In a single-server setting, previous studies have attempted to address this problem by decrypting controller states at every time step \citep{Kogiso2015-go}, employing fully homomorphic encryption with bootstrapping \citep{Kim2016-lc}, periodically resetting controller states \citep{Murguia2020-ov}, or reformulating controller representations.
The controller-reformulation approach includes approximating a controller using a finite impulse response filter \citep{Schluter2021-ek}, transforming a controller via re-encrypted control inputs so that it has an integer state matrix \citep{Kim2023-pk}, and representing controller states using historical input and output data \citep{Teranishi2024-ha,Lee2025-jo}.
Although these approaches prevent overflow, the underlying difficulty stems from the intrinsic limitation that homomorphic encryption supports only addition and multiplication.

To overcome this limitation, a recent study has explored a two-party setting and applied secure multi-party computation to realize encrypted control \citep{Teranishi2025-wv}.
Instead of relying on homomorphic encryption, this approach employs secret sharing and constructs multiplication and truncation protocols.
The truncation protocol effectively prevents state overflow without requiring state decryption, state reset, or input re-encryption.
Compared with single-server methods, the two-party protocol simplifies system configuration, reduces the computational cost for the client, and improves security guarantees.
Despite these theoretical advantages, the two-party setting requires additional online communication between the servers.
This additional communication requirement raises questions about its practicality as the protocol has not been tested empirically.

Motivated by these concerns, this study experimentally examines the feasibility of the two-party controller computation protocol.
We begin by presenting a vectorized variant of the two-party computation to simplify the computation and yield a protocol more amenable to implementation.
Building on this variant, we implement the protocol on a commercial cloud platform and conduct experiments using an inverted pendulum testbed.
The experimental results demonstrate that the protocol successfully stabilizes the pendulum despite the online communication overhead imposed by the two-party setting.
Our main contributions are summarized as follows:
\begin{itemize}
    \item
    We clarify a sufficient condition for the modulus bit length in secret sharing that guarantees the vectorized two-party controller computation protocol avoids state overflow.
    This condition provides a practical guideline for selecting system parameters when implementing the protocol.

    \item
    We demonstrate the practicality of secure multi-party computation in an actual control system.
    To the best of our knowledge, this is the first implementation of secure multi-party computation for real-time control systems on a commercial computing service.

    \item
    We provide a detailed evaluation of the processing times of the subprotocols that compose the controller computation protocol.
    The measurements reveal the primary bottleneck in processing time and offer insights for further optimization of secure controller computation in real-time applications.
\end{itemize}

The remainder of this paper is organized as follows.
\secref{sec:protocol} introduces secret sharing, matrix triples, and the subprotocols for multiplication and truncation.
It also presents a vectorized variant of the two-party controller computation protocol.
\secref{sec:experiments} describes the experimental setup and results.
\secref{sec:conclusion} concludes the paper and discusses future work.

\emph{Notation:}
The sets of integers, positive integers, and real numbers are denoted by $\Z$, $\N$, and $\R$, respectively.
For $k \in \N$, $\ZZ{k} \coloneqq \{ -2^{k - 1}, \dots, 2^{k - 1} - 1 \}$ denotes the set of $k$-bit integers.
For $k, \ell \in \N$ such that $k > \ell$, $\QQ{k}{\ell} \coloneqq \{ 2^{-\ell} z \mid z \in \ZZ{k} \}$ denotes the set of $k$-bit fixed-point numbers with an $\ell$-bit fractional part.
For $q \ge 2$, $\Z_q \coloneqq \Z \cap [-q/2, q/2)$ denotes the set of integers within $[-q/2, q/2)$.
For $z \in \Z$, $z \bmod q \coloneqq z - \floor{ \frac{z + q / 2}{q} } q$ denotes the reduction of $z$ modulo $q$ over $\Z_q$, where $\floor{x} \coloneqq \max\{z \in \Z \mid z \le x\}$ is the floor function.
The rounding function is also defined by $\round{x} \coloneqq \floor{x + 1/2}$.
For vectors and matrices, $(\cdot \bmod q)$, $\floor{ \cdot }$, and $\round{ \cdot }$ are performed element-wise.

\section{Two-Party Controller Computation}\label{sec:protocol}

This section presents a vectorized variant of the two-party protocol for controller computation.
Consider the discrete-time linear time-invariant controller
\[
    \begin{aligned}
        x_{t+1} &= A x_t + B y_t, \\
        u_t &= C x_t + D y_t,
    \end{aligned}
\]
where $t = 0, 1, 2, \dots$ is the time index, $x_t \in \R^n$ is the state, $u_t \in \R^m$ is the control input, and $y_t \in \R^p$ is the sensor measurement.
The controller can be represented in the form of a matrix-vector product,
\begin{equation}
    \psi_t = \Phi \xi_t,
    \label{eq:controller}
\end{equation}
where
\[
    \psi_t \coloneqq
    \begin{bmatrix}
        x_{t+1} \\
        u_t
    \end{bmatrix}, \quad
    \Phi \coloneqq
    \begin{bmatrix}
        A & B \\
        C & D
    \end{bmatrix}, \quad
    \xi_t \coloneqq
    \begin{bmatrix}
        x_t \\
        y_t
    \end{bmatrix}.
\]
We assume that the controller parameters, initial state, and sensor measurement at each time step are given by fixed-point numbers.
Note that this assumption is reasonable in practice because real numbers can be approximated by fixed-point numbers with desired precision.

\begin{assum}\label{asm:fixed-point}
    The parameter matrix, initial state, and sensor measurement in \eqref{eq:controller} satisfy $\Phi \in \QQ{k}{\ell}^{(n + m) \times (n + p)}$, $x_0 \in \QQ{k}{\ell}^n$, and $y_t \in \QQ{k}{\ell}^p$.
\end{assum}

Our goal is to securely implement the controller computation \eqref{eq:controller} on untrusted servers.
To achieve this, we employ a two-party protocol in \citet{Teranishi2025-wv} with modifications.
In what follows, we denote the computing parties as $P_i$, $i \in \{0, 1\}$.

\subsection{Secret Sharing}

The two-party protocol guarantees security by splitting the controller \eqref{eq:controller} into two randomized controllers and computing each of the controllers separately \citep{Teranishi2025-wv}.
Such randomization is performed using a cryptographic primitive called secret sharing.

\begin{defn}
    A (vectorized) secret sharing scheme over $\Z_q$ with a prime $q$ consists of two polynomial-time algorithms $\Share$ and $\Reconst$:
    \begin{itemize}
        \item $\share{X} \gets \Share(X)$:
        The share generation algorithm takes a secret $X \in \Z_q^{d_1 \times d_2}$ as input and outputs shares $\share{X} = (\share{X}_0, \share{X}_1) = (R, X - R \bmod q)$, where $R \in \Z_q^{d_1 \times d_2}$ is a random matrix.
        
        \item $X \gets \Reconst(\share{X})$:
        The reconstruction algorithm takes shares $\share{X}$ as input and outputs $X = \share{X}_0 + \share{X}_1 \bmod q$.
    \end{itemize}
\end{defn}

In our two-party setting, the party $P_i$ receives the share $\share{X}_i$.
Each share $\share{X}_i$ is distributed over $\Z_q^{d_1 \times d_2}$ uniformly at random, and each share reveals no information about a secret $X$ to $P_i$.
The secret $X$ can be reconstructed by combining both shares, i.e.,
\[
    \Reconst(\Share(X)) = X.
\]
Moreover, by construction, the secret sharing scheme allows (constant) addition/subtraction,
\begin{align*}
    \share{X} \pm Y &\coloneqq (\share{X}_0 \pm Y \bmod q, \share{X}_1), \\
    \share{X} \pm \share{Y} &\coloneqq (\share{X}_0 \pm \share{Y}_0 \bmod q, \share{X}_1 \pm \share{Y}_1 \bmod q),
\end{align*}
and constant multiplication,
\begin{align*}
    k \share{X} &\coloneqq (k \share{X}_0 \bmod q, k \share{X}_1 \bmod q), \\
    \share{X} Z &\coloneqq (\share{X}_0 Z \bmod q, \share{X}_1 Z \bmod q),
\end{align*}
where $X, Y \in \Z_q^{d_1 \times d_2}$, $Z \in \Z_q^{d_2 \times d_3}$, and $k \in \Z_q$.
The computations can be performed locally and satisfy
\begin{align*}
    \share{X} \pm Y &= \share{X} \pm \share{Y} = \share{X \pm Y \bmod q}, \\
    k \share{X} &= \share{kX \bmod q}, \\
    \share{X} Z &= \share{XZ \bmod q}.
\end{align*}
We also define $X \pm \share{Y}$ and $X \share{Z}$ in a similar way.

\subsection{Matrix Triple}

Recall that we aim to implement the matrix-vector product \eqref{eq:controller} over secret sharing.
Unlike addition and subtraction, however, matrix multiplication between two shares $\share{X}$ and $\share{Y}$ (of appropriate dimensions) is not straightforward.
This is because it requires computing cross terms $\share{X}_0 \share{Y}_1$ and $\share{X}_1 \share{Y}_0$, which include all shares of both $X$ and $Y$, and thus discloses the secrets to the parties.
To overcome this difficulty, the two-party protocol applies Beaver triples \citep{Beaver2007-kr} to construct a subprotocol for multiplication.
We use matrix triples in \citet{Mohassel2017-yi}, which are vectorized Beaver triples.
Here, a matrix triple over $\Z_q$ is a triplet $(\share{U}, \share{V}, \allowbreak \share{W})$ that comprises the shares of random matrices $U \in \Z_q^{d_1 \times d_2}$, $V \in \Z_q^{d_2 \times d_3}$, and $W \in \Z_q^{d_1 \times d_3}$ such that $W = UV \bmod q$.

Let $X \in \Z_q^{d_1 \times d_2}$ and $Y \in \Z_q^{d_2 \times d_3}$.
\proref{pro:mult} computes the shares $\share{Z}$ of $Z = X Y \bmod q \in \Z_q^{d_1 \times d_3}$ using a matrix triple.
Suppose the party $P_i$ has $\share{X}_i$, $\share{Y}_i$, and $(\share{U}_i,  \share{V}_i, \share{W}_i)$.
Each party randomizes its own shares $\share{X}_i$ and $\share{Y}_i$ as $\share{S}_i = \share{X}_i - \share{U}_i \bmod q$ and $\share{T}_i = \share{Y}_i - \share{V}_i \bmod q$, respectively.
The parties then open the randomized shares and reconstruct $S$ and $T$.
Each party computes and outputs $\share{Z}_i = \share{U}_i T + S \share{V}_i + \share{W}_i + i \cdot ST \bmod q$.
The output shares satisfy
\begin{align*}
    &\Reconst(\share{Z})
    = \share{Z}_0 + \share{Z}_1 \bmod q, \\
    &= U T + S V + W + ST \bmod q, \\
    &= U (Y \!-\! V) \!+\! (X \!-\! U) V \!+\! U V \!+\! (X \!-\! U)(Y \!-\! V) \bmod q, \\
    &= XY \bmod q,
\end{align*}
which implies $\share{Z} = \share{XY \bmod q}$.
We denote $\share{X} \share{Y} \coloneqq \Mult(\share{X}, \share{Y})$ for notational convenience.

\begin{figure}[t]
    \begin{algorithm}[H]
        \caption{Matrix multiplication ($\Mult$)}
        \label{pro:mult}
        \begin{algorithmic}[1]
            \Require Shares $\share{X}$ of $X \in \Z_q^{d_1 \times d_2}$, shares $\share{Y}$ of $Y \in \Z_q^{d_2 \times d_3}$, and a matrix triple $(\share{U}, \share{V}, \share{W})$.
            \Ensure Shares $\share{Z} = \share{XY \bmod q}$.
            \State $\share{S}_i \gets \share{X}_i - \share{U}_i \bmod q$, $\share{T}_i \gets \share{Y}_i - \share{V}_i \bmod q$
            \State $S \gets \Reconst(\share{S})$, $T \gets \Reconst(\share{T})$
            \State $\share{Z}_i \gets \share{U}_i T + S \share{V}_i + \share{W}_i + i \cdot ST \bmod q$
        \end{algorithmic}
    \end{algorithm}
    \vspace{-.5em}
\end{figure}

\subsection{Bit Truncation}

\asmref{asm:fixed-point} implies that the parameter matrix and input/output vectors in \eqref{eq:controller} are represented by fixed-point numbers.
To apply the multiplication protocol to $X \in \QQ{k}{\ell}^{d_1 \times d_2}$ and $Y \in \QQ{k}{\ell}^{d_2 \times d_3}$, we need to encode them into integer matrices over $\Z_q$.
When $\ZZ{k} \subset \Z_q$, the encoding and corresponding decoding processes are respectively performed by
\[
    \Ecd(X, \ell) \coloneqq \bar{X} = 2^\ell X, \quad
    \Dcd(\bar{X}, \ell) \coloneqq 2^{-\ell} \bar{X}.
\]
The multiplication of shares of $\bar{X} = \Ecd(X, \ell)$ and $\bar{Y} = \Ecd(Y, \ell)$ is then given by $\share{\bar{X}} \share{\bar{Y}} = \share{2^{2 \ell} X Y \bmod q}$.
This increases the fractional part of the resulting matrix by $\ell$~bits.
The controller state $x_t$ is recursively multiplied by $A$ in the computation \eqref{eq:controller}, and thus, the fractional part of the encoded state reaches the data size $k$ (i.e., overflow) at a certain finite time step whenever $A$ is not an integer matrix \citep{Cheon2018-vr}.

To avoid this issue, the two-party protocol adopted the truncation protocol in \citet{Escudero2020-ih}.
A vectorized version of the truncation protocol is shown in \proref{pro:trunc}.
Let $\lambda \in \N$ be a security parameter of the protocol.
Suppose that $\kappa = \floor{\log_2 q} - \lambda - 1 > \ell$, and the party $P_i$ has the share $\share{X}_i$ of $X \in \ZZ{\kappa}^{d_1 \times d_2}$ and the pair of shares $(\share{R}_i, \share{R'}_i)$ of random matrices $R \in \ZZ{\kappa - \ell + \lambda}^{d_1 \times d_2}$ and $R' \in \ZZ{\ell}^{d_1 \times d_2}$.
Each party randomizes its own share $\share{X}_i$ as $\share{Y}_i = \share{X}_i + 2^\ell \share{R}_i + \share{R'}_i + i \cdot 2^{\ell - 1} J \bmod q$, where $J$ is the $d_1$-by-$d_2$ matrix of ones.
The party $P_0$ sends its randomized share to $P_1$, and then $P_1$ reconstructs $Y$.
Each party computes and outputs $\share{Z}_i = 2^{-\ell} ( \share{X}_i + \share{R'}_i - i \cdot ( Y - 2^{\ell - 1} J \bmod 2^\ell ) ) \bmod q$.
Here, note that $2^{-\ell}$ is the modular inverse of $2^\ell$ modulo $q$.
As a result, the output shares represent shares of a matrix obtained by truncating the least $\ell$ bits of $X$, with an error $W \in \{-1, 0, 1\}^{d_1 \times d_2}$.

\begin{lem}\label{lem:trunc}
    Suppose $q$ is a prime.
    Let $\kappa, \ell, \lambda \in \N$ such that $\kappa = \floor{ \log_2 q } - \lambda - 1 > \ell$.
    It holds that, for all $X \in \ZZ{\kappa}^{d_1 \times d_2}$,
    \[
        \Reconst\qty(\Trunc\qty(\share{X}, \ell)) = \round*{ \frac{ X }{ 2^\ell } } + W,
    \]
    where $\share{X} \gets \Share(X)$ and $W \in \{-1, 0, 1\}^{d_1 \times d_2}$.
\end{lem}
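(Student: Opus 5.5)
Since all operations in Protocol~\ref{pro:trunc} (the $\Trunc$ protocol) act element-wise on the matrices, I will argue entry by entry: fix one entry, write $x \in \ZZ{\kappa}$, $r \in \ZZ{\kappa-\ell+\lambda}$, $r' \in \ZZ{\ell}$ for the corresponding entries of $X, R, R'$, and show that the reconstructed output equals $\round{x/2^\ell} + w$ with $w \in \{-1,0,1\}$. The first step is to compute the opened value. Summing the randomized shares, the reconstruction gives $y = x + 2^\ell r + r' + 2^{\ell-1} \bmod q$. The key claim is that no modular wrap-around occurs, so $y$ equals this integer exactly. To see this, bound $|x| \le 2^{\kappa-1}$, $|2^\ell r| \le 2^{\kappa+\lambda-1}$, $|r'|\le 2^{\ell-1}$, and $2^{\ell-1}$. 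The total magnitude is then at most roughly $2^{\kappa+\lambda-1}(1 + 2^{-\lambda} + 2^{\ell-\kappa-\lambda+1})$. This is below $2^{\kappa+\lambda} = 2^{\floor{\log_2 q}-1} \le q/2$ because $\kappa > \ell$ and $\lambda \ge 1$, so the integer already lies in $\Z_q$.

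Next I reconstruct the output. Summing the two output shares gives
\[
z = 2^{-\ell}\bigl(x + r' - (y - 2^{\ell-1} \bmod 2^\ell)\bigr) \bmod q,
\]
where the inner reduction is taken over $\Z_{2^\ell}$ (centered, by the paper's convention). Write $c := x + r' + 2^{\ell-1}$, so that $y = c + 2^\ell r$ and hence $y - 2^{\ell-1} \equiv c - 2^{\ell-1} \equiv x + r' \pmod{2^\ell}$. Thus the subtracted term is $e := (x + r') \bmod 2^\ell$. The integer $x + r' - e$ is then exactly divisible by $2^\ell$, so multiplying by the modular inverse $2^{-\ell}$ yields the true integer quotient $(x+r'-e)/2^\ell$. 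Its magnitude is about $2^{\kappa-\ell}$, well inside $\Z_q$, so the final reduction is the identity. Consequently $z$ equals $(x + r')$ rounded to the nearest multiple of $2^\ell$ (centered reduction), divided by $2^\ell$, which is $\round{(x+r')/2^\ell}$ up to the tie convention.

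It remains to compare $\round{(x+r')/2^\ell}$ with $\round{x/2^\ell}$. Since $|r'| \le 2^{\ell-1}$, we have $|(x+r')/2^\ell - x/2^\ell| \le 1/2$. Rounding two reals that differ by at most $1/2$ gives integers differing by at most $1$. Indeed, $|\round{a}-\round{b}| < |a-b| + 1 \le 3/2$, and an integer difference below $3/2$ must be at most $1$. This yields $w \in \{-1,0,1\}$.

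I expect the main obstacle to be the careful bookkeeping of the centered reductions: checking the no-wrap bound on $y$ with the exact ranges $\ZZ{\cdot}$ and $\Z_q=[-q/2,q/2)$, and handling the tie case in $\bmod\,2^\ell$ versus $\round{\cdot}$. I would handle this by explicitly writing $e\in[-2^{\ell-1},2^{\ell-1})$ and noting that any tie discrepancy is absorbed into $w$. The security parameter $\lambda$ plays no role in correctness; it only provides the slack needed in the no-wrap bound.
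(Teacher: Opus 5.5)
Your proof is correct and takes essentially the same approach as the paper: both reduce to a single entry, but where the paper simply invokes Lemma~1 of \citet{Teranishi2025-wv}, you reprove that scalar fact directly (no wrap-around in the opened $y$, exact division of $x+r'-e$ by $2^\ell$, then the rounding-difference bound). One small tightening: the tie hedge is unnecessary, since with the paper's centered $\bmod$ and $\round{\cdot}$ conventions $(x+r'-e)/2^\ell = \floor{(x+r'+2^{\ell-1})/2^\ell} = \round{(x+r')/2^\ell}$ exactly, whereas a genuine tie discrepancy could not simply be ``absorbed'' into $w$ because it could stack with the $\pm 1$ coming from $r'$.
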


\begin{pf}
    The proof is obtained by adopting Lemma~1 in \citet{Teranishi2025-wv} for each element of $X$.
\end{pf}

\begin{figure}[t]
    \begin{algorithm}[H]
        \caption{Bit truncation ($\Trunc$)}
        \label{pro:trunc}
        \begin{algorithmic}[1]
            \Require Shares $\share{X}$ of $X \in \ZZ{\kappa}^{d_1 \times d_2}$, shares $(\share{R}, \share{R'})$ of random matrices $R \in \ZZ{\kappa - \ell + \lambda}^{d_1 \times d_2}$ and $R' \in \ZZ{\ell}^{d_1 \times d_2}$, and bit length $\ell \in \N$, where $\kappa = \floor{ \log_2 q } - \lambda - 1 > \ell$, and $\lambda \in \N$ is a security parameter.
            \Ensure Shares $\share{Z} = \share{\round{X / 2^\ell} + W}$ with $W \in \Z_3^{d_1 \times d_2}$.
            \State $\share{Y}_i \gets \share{X}_i + 2^\ell \share{R}_i + \share{R'}_i + i \cdot 2^{\ell - 1} J \bmod q$
            \State $Y \gets \Reconst(\share{Y})$
            \State $\share{Z}_i \gets 2^{-\ell} ( \share{X}_i + \share{R'}_i - i \cdot ( Y - 2^{\ell - 1} J \bmod 2^\ell ) ) \bmod q$
        \end{algorithmic}
    \end{algorithm}
    \vspace{-1em}
\end{figure}

\subsection{Controller Computation}

Using the multiplication and truncation protocols, we now present a controller computation protocol (\proref{pro:controller}).
The protocol consists of the offline phase (lines 1--4) and online phase (lines 5--16).
\figref{fig:controller} illustrates the data flow of the protocol, where the dotted arrows represent communication of the intermediate messages $(\share{S_t}_i, \share{T_t}_i)$ and $\share{Y_t}_0$ in the subprotocols.
We refer to the entities corresponding to the gray, blue, and orange boxes as a client, $P_0$, and $P_1$, respectively.
In what follows, the dimensions $d_1$, $d_2$, and $d_3$ in \proref{pro:mult} are set to $n + m$, $n + p$, and $1$, respectively.
The dimensions $d_1$ and $d_2$ in \proref{pro:trunc} are also set to $n$ and $1$, respectively.

\begin{figure}[t]
    \begin{algorithm}[H]
        \caption{Two-party controller computation}
        \label{pro:controller}
        \begin{algorithmic}[1]
            \Require Controller parameter $\Phi \in \QQ{k}{\ell}^{(n + m) \times (n + p)}$, initial state $x_0 \in \QQ{k}{\ell}^n$, sensor measurement $y_t \in \QQ{k}{\ell}^p$, bit lengths $k, \ell \in \N$, modulus $q$, and security parameter $\lambda \in \N$.
            \Ensure Control input $u_t \in \R^m$.
            \State $\triangleright$ Client
            \State $\bar{\Phi} \gets \Ecd(\Phi, \ell)$, $\share{\bar{\Phi}} \gets \Share(\bar{\Phi})$
            \State $\bar{x}_0 \gets \Ecd(x_0, \ell)$, $\share{\bar{x}_0} \gets \Share(\bar{x}_0)$
            \State Send $\share{\bar{\Phi}}_i$ and $\share{\bar{x}_0}_i$ to $P_i$
            \ForAll{$t = 0, 1, 2, \dots$}
                \State $\triangleright$ Client
                \State $\bar{y}_t \gets \Ecd(y_t, \ell)$, $\share{\bar{y}_t} \gets \Share(\bar{y}_t)$
                \State Send $\share{\bar{y}_t}_i$ to $P_i$
                \State Generate $(\share{U_t}, \share{V_t}, \share{W_t})$ and $(\share{R_t}, \share{R'_t})$
                \State Send $(\share{U_t}_i, \share{V_t}_i, \share{W_t}_i)$, and $(\share{R_t}_i, \share{R'_t}_i)$ to $P_i$
                \State $\triangleright$ Parties
                \State $\share{\bar{\xi}_t} \gets [ \share{\bar{x}_t}^\top \ \share{\bar{y}_t}^\top ]^\top$, $\share{\bar{\psi}_t} \gets \Mult(\share{\bar{\Phi}}, \share{\bar{\xi}_t})$
                \State $[ \share{\tilde{x}_{t+1}}^\top \ \share{\bar{u}_t}^\top ]^\top \gets \share{\bar{\psi}_t}$, $\share{\bar{x}_{t+1}} \gets \Trunc(\share{\tilde{x}_{t+1}}, \ell)$
                \State Send $\share{\bar{u}_t}$ to the client
                \State $\triangleright$ Client
                \State $\bar{u}_t \gets \Reconst(\share{\bar{u}_t})$, $u_t \gets \Dcd(\bar{u}_t, 2\ell)$
            \EndFor
        \end{algorithmic}
    \end{algorithm}
    \vspace{-.5em}
\end{figure}

\begin{figure}[t]
    \centering
    \includegraphics[scale=.9]{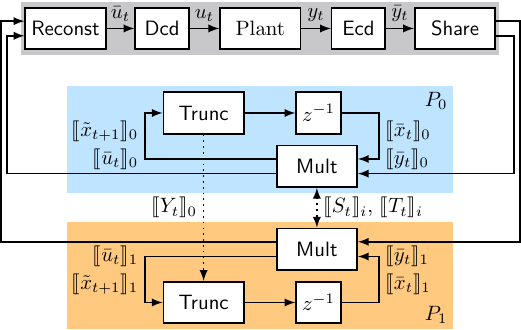}
    \vspace{-.5em}
    \caption{Data flow of \proref{pro:controller}.}
    \label{fig:controller}
\end{figure}

In the offline phase, the client generates shares of the parameter matrix $\Phi$ and the initial state $x_0$ as
\begin{equation}
    \share{\bar{\Phi}} \gets \Share(\bar{\Phi}), \quad \share{\bar{x}_0} \gets \Share(\bar{x}_0),
    \label{eq:controller-params}
\end{equation}
where
\begin{align*}
    \bar{\Phi} &\coloneqq \Ecd(\Phi, \ell) = 2^\ell \Phi \in \ZZ{k}^{(n + m)(n + p)}, \\
    \bar{x}_0 &\coloneqq \Ecd(x_0, \ell) = 2^\ell x_0 \in \ZZ{k}^n.
\end{align*}
The client then sends $\share{\bar{\Phi}}_i$ and $\share{\bar{x}_0}_i$ to $P_i$ who stores the received shares.

In the online phase, for every $t$, the client reads the sensor measurement $y_t$ and generates its shares by
\begin{equation}
    \share{\bar{y}_t} \gets \Share(\bar{y}_t), \quad \bar{y}_t \coloneqq \Ecd(y_t, \ell) = 2^\ell y_t \in \ZZ{k}^p.
    \label{eq:controller-sensor}
\end{equation}
Here, note that $\bar{y}_t$ is always a $k$-bit integer vector thanks to \asmref{asm:fixed-point}.
It also generates a matrix triple $(\share{U_t}, \share{V_t}, \share{W_t})$ and a pair of random shares $(\share{R_t}, \share{R'_t})$.
The client sends $\share{\bar{y}_t}_i$, $(\share{U_t}_i, \share{V_t}_i, \share{W_t}_i)$, and $(\share{R_t}_i, \share{R'_t}_i)$ to $P_i$.
The parties compute
\begin{equation}
    \share{\bar{\psi}_t} \gets \Mult(\share{\bar{\Phi}}, \share{\bar{\xi}_t}), \quad \share{\bar{\xi}_t} \coloneqq
    \begin{bmatrix}
        \share{\bar{x}_t} \\
        \share{\bar{y}_t}
    \end{bmatrix}
    \label{eq:controller-mult}
\end{equation}
by \proref{pro:mult} using the matrix triple.
Then, the parties compute
\begin{equation}
    \share{\bar{x}_{t+1}} \gets \Trunc(\share{\tilde{x}_{t+1}}, \ell), \quad
    \begin{bmatrix}
        \share{\tilde{x}_{t+1}} \\
        \share{\bar{u}_t}
    \end{bmatrix}
    \coloneqq \share{\bar{\psi}_t}
    \label{eq:controller-trunc}
\end{equation}
by \proref{pro:trunc} using the random shares.
Finally, each party $P_i$ separately returns $\share{\bar{u}_t}_i$ to the client who obtains the control input by
\begin{equation}
    u_t = \Dcd(\bar{u}_t, 2 \ell), \quad \bar{u}_t \coloneqq \Reconst(\share{\bar{u}_t}).
    \label{eq:controller-input}
\end{equation}
The following theorem provides a bit length of the modulus $q$ required for the correct operation of \proref{pro:controller}.

\begin{thm}\label{thm:modulus-bits}
    Consider the controller \eqref{eq:controller} under \asmref{asm:fixed-point}.
    Assume that $A$ is Schur stable.
    Let $\lambda \in \N$ be a security parameter.
    If the bit lengths $k, \ell$ and the modulus $q$ satisfy $k - \ell \ge 2$ and
    \begin{align*}
        \log_2 q &\ge 3k - \ell + \lambda + 2 + \floor{ \log_2 f(n, p, c, \gamma) }, \\
        f(n, p, c, \gamma) &= \max\qty{n, p} (1 + p) \sqrt{n} \frac{ c }{ 1 - \gamma },
    \end{align*}
    then the controller that consists of \eqref{eq:controller-params}, \eqref{eq:controller-mult}, and \eqref{eq:controller-trunc} holds
    \begin{align*}
        \bar{x}_{t+1} &= \round{ A \bar{x}_t + B \bar{y}_t } + w_t, \\
        \bar{u}_t &= \bar{C} \bar{x}_t + \bar{D} \bar{y}_t,
    \end{align*}
    for all $t = 0, 1, 2, \dots$, where $\bar{x}_t = \Reconst(\share{\bar{x}_t})$, $\bar{y}_t$ and $\bar{u}_t$ are respectively defined in \eqref{eq:controller-sensor} and \eqref{eq:controller-input}, $w_t \in \{-1, 0, 1\}^n$, $c \ge 1$ and $\gamma \in (0, 1)$ are constants such that $\norm{ A^t } \le c \gamma^t$ for all $t$, and $\norm{\cdot}$ is the induced $2$-norm.
\end{thm}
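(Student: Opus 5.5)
We argue by induction on $t$, maintaining the invariant that $\bar{x}_t = \Reconst(\share{\bar{x}_t})$ is a bounded integer vector with $\norm{\bar{x}_t}_\infty \le M$. The bound $M$ is chosen so that every intermediate quantity in \eqref{eq:controller-mult} and \eqref{eq:controller-trunc} lies in $\ZZ{\kappa}$ with $\kappa = \floor{\log_2 q} - \lambda - 1$. Under this invariant, the correctness of $\Mult$ shows that $\Reconst(\share{\bar{\psi}_t}) = \bar{\Phi}\bar{\xi}_t \bmod q$. Whenever $\norm{\bar{\Phi}\bar{\xi}_t}_\infty < 2^{\kappa-1}$, the reduction modulo $q$ is the identity. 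This gives $\tilde{x}_{t+1} = \bar{A}\bar{x}_t + \bar{B}\bar{y}_t$ and $\bar{u}_t = \bar{C}\bar{x}_t + \bar{D}\bar{y}_t$, where $\bar{A} = 2^\ell A$ and so on. \lemref{lem:trunc} then yields $\bar{x}_{t+1} = \round{2^{-\ell}(\bar{A}\bar{x}_t + \bar{B}\bar{y}_t)} + w_t = \round{A\bar{x}_t + B\bar{y}_t} + w_t$ with $w_t \in \{-1,0,1\}^n$. This is exactly the claimed recursion. The theorem therefore reduces to a uniform-in-$t$ bound on $\norm{\bar{\Phi}\bar{\xi}_t}_\infty$ that fits into $\kappa - 1$ bits.

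For the bound, write the reconstructed recursion as a perturbed linear system. Set $\bar{x}_{t+1} = A\bar{x}_t + B\bar{y}_t + e_t$ with $e_t \in [-3/2, 3/2]^n$, which absorbs both the rounding error and $w_t$. Unrolling gives
\[
\bar{x}_t = A^t\bar{x}_0 + \sum_{s=0}^{t-1} A^{t-1-s}(B\bar{y}_s + e_s).
\]
Take $2$-norms and use $\norm{A^t} \le c\gamma^t$. Since entries of $\bar{x}_0$ and $\bar{y}_s$ are bounded by $2^{k-1}$ and entries of $B$ by $2^{k-\ell-1}$, we have $\norm{B\bar{y}_s}_2 \le \sqrt{n}\,p\,2^{k-\ell-1}2^{k-1}$ and $\norm{e_s}_2 \le \tfrac32\sqrt{n}$. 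The geometric series gives a factor $c/(1-\gamma)$, and $c \ge 1$ lets the $\bar{x}_0$ term be absorbed into the same factor. The hypothesis $k - \ell \ge 2$ lets the $\tfrac32$ error term be dominated by $p\,2^{2k-\ell-2}$. The result is $\norm{\bar{x}_t}_\infty \le \norm{\bar{x}_t}_2 \lesssim \sqrt{n}(1+p)\tfrac{c}{1-\gamma}\,2^{2k-\ell-2}$, with constants to be tuned.

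Next, bound each entry of $\bar{\Phi}\bar{\xi}_t$ by $2^{k-1}(n\norm{\bar{x}_t}_\infty + p\norm{\bar{y}_t}_\infty)$, which is at most $2^{k-1}\max\{n,p\}(\norm{\bar{x}_t}_\infty + 2^{k-1})$. Combining this with the state bound produces roughly $\max\{n,p\}(1+p)\sqrt{n}\tfrac{c}{1-\gamma}\,2^{3k-\ell-\text{const}} = f\cdot 2^{3k-\ell-\text{const}}$. The requirement $\log_2 q \ge 3k - \ell + \lambda + 2 + \floor{\log_2 f}$ gives $\kappa - 1 \ge 3k - \ell + \floor{\log_2 f}$, and this should dominate that bound. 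The same bound also certifies that the truncation input $\tilde{x}_{t+1}$ lies in $\ZZ{\kappa}$, which \lemref{lem:trunc} requires. It shows that $\bar{x}_{t+1}$, being about $2^{-\ell}$ times that size, keeps the invariant, closing the induction.

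The main obstacle is the constant bookkeeping. The powers of two, the $\max\{n,p\}$ versus $(1+p)$ factors, and the floor in $\floor{\log_2 f}$ must line up exactly with the stated inequality. In particular, the $\tfrac32$ rounding error and the $x_0$ term must be absorbed using only $k - \ell \ge 2$ and $c \ge 1$, without losing an extra bit. If the first attempt is off by a bit, I would fold $\norm{\bar{x}_0}$ and the noise into a single term $(1+p)2^{2k-\ell-1}\sqrt{n}$ before summing the geometric series. I would also use the strict inequality $\floor{\log_2 f} > \log_2 f - 1$ to recover the missing factor of two.
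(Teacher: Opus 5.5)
Your proposal is correct and follows essentially the same route as the paper's proof. Both reduce the claim to the range condition $\bar{\Phi}\bar{\xi}_t \in \ZZ{\kappa}^{n+m}$ via $\Mult$ and \lemref{lem:trunc}, then unroll the perturbed recursion with $\norm{A^t} \le c\gamma^t$ to get $\norm{\bar{x}_t}_\infty \le M \coloneqq 2^{2k-\ell-2}(1+p)\sqrt{n}\,c/(1-\gamma)$, and finally bound $\bar{\Phi}\bar{\xi}_t$ entrywise by $2^{k-1}\max\{n,p\}(M + 2^{k-1}) < 2^{3k-\ell-2} f < 2^{\kappa-2}$. So the constant bookkeeping you worried about closes with a bit to spare.

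One correction to your last step. The invariant at $t+1$ must be closed by the unrolled bound itself, i.e.\ by strong induction using the recursion at all earlier steps, as the paper does. It cannot be closed by saying $\bar{x}_{t+1}$ is about $2^{-\ell}$ times the bound on $\tilde{x}_{t+1}$: that only gives roughly $2^{k-\ell-1}\max\{n,p\}\,M \ge 2M$.
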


\begin{pf}
    See \appref{app:proof}.
\end{pf}

The theorem guarantees that, when choosing a sufficiently large modulus $q$, \proref{pro:controller} securely runs the encoded version of \eqref{eq:controller} for an infinite time horizon without state decryption, state reset, or input re-encryption.
The selection of $q$ involves a trade-off, as a larger $q$ allows for increasing the fixed-point precision or the security level, although it increases the communication overhead.
While the encoded controller is not equivalent to the original one due to the rounding error and $w_t$ imposed by the truncation protocol, their effects can be negligible in terms of control performance.
We will confirm this by experiment in the next section.

\begin{rem}[Security]
    \proref{pro:controller} inherits the security guarantees of the original two-party controller computation protocol.
    The communication channels are protected via standard symmetric-key encryption, preventing any data leakage to network eavesdroppers.
    Moreover, the computing parties obtain no information beyond their input and output shares as long as they are semi-honest and do not collude, thereby achieving $\lambda$-bit statistical security.
    Statistical security is a stronger security notion than computational security because it does not restrict adversary's computational capability to polynomial time.
    See \citet{Teranishi2025-wv} for more details on the protocol security.
\end{rem}

\section{Experiments}\label{sec:experiments}

This section examines the controller computation protocol through experiments conducted on a laboratory testbed.
\figref{fig:system} illustrates the system architecture of the developed testbed.
The controlled plant was a rotary inverted pendulum (Quanser Inc.), shown in \figref{fig:rip}, connected to a laptop computer (Windows~11, Intel(R) Core(TM) Ultra~7 165U, 32~GB RAM) as the client.
Two Amazon EC2 instances (t3.micro, us-east-2) were deployed as the parties $P_0$ and $P_1$, and the protocol was implemented using Python~3.12 on these servers.
The laptop computer communicated with the servers via TCP over a Wi-Fi connection through an access point.

\begin{figure}[t]
    \centering
    \includegraphics[scale=.8]{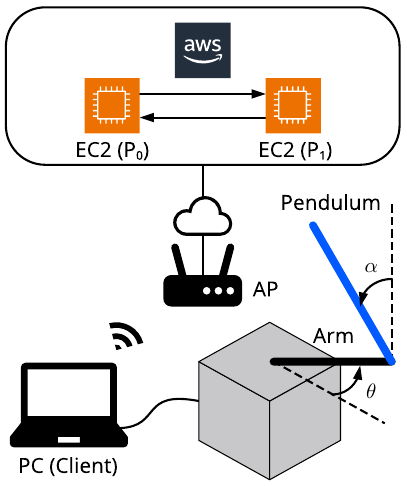}
    \vspace{-.5em}
    \caption{System architecture of the testbed.}
    \label{fig:system}
\end{figure}

\begin{figure}[t]
    \centering
    \includegraphics[height=40mm]{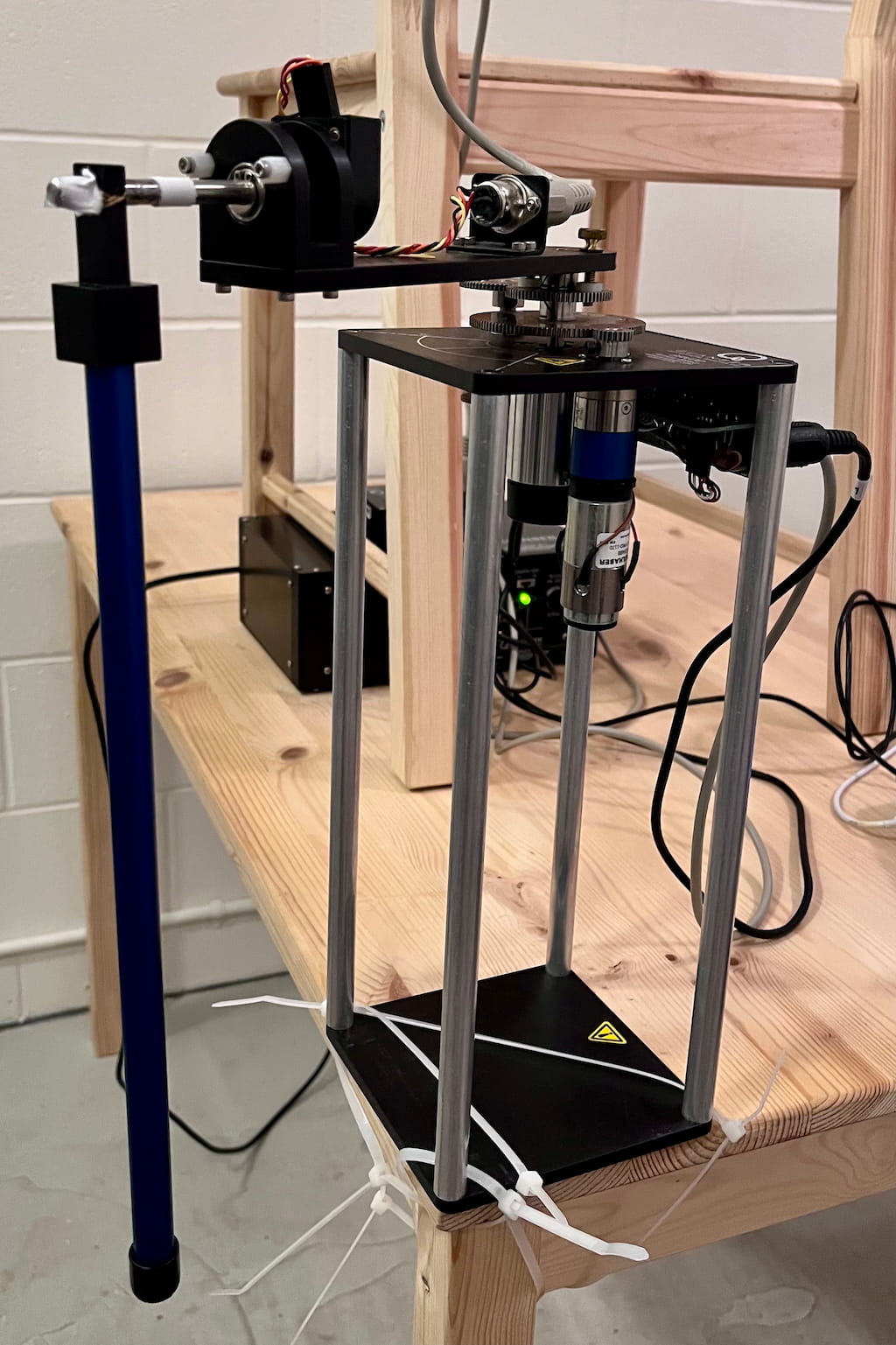}
    \vspace{-.5em}
    \caption{Rotary inverted pendulum.}
    \label{fig:rip}
\end{figure}

The objective of the experiment was to stabilize the pendulum around its upright equilibrium.
Let $\theta$ and $\alpha$ denote the encoder measurements of the arm and pendulum angles, respectively.
Given that the encoder resolutions are $4096$ counts per revolution, we assumed $y_t = [\theta_t \ \alpha_t]^\top \in (-180, 180]^2$ and $y_t \in \QQ{18}{9}^2$.
The sampling time was set to $40$~ms, and the plant was treated as a one-step input-delayed system, as we verified that the communication delay remained within this interval in our implementation.
That is, the control input computed at time step $t$ was applied to the plant at the next time step $t + 1$.
Here, the control input corresponds to the voltage applied to the motor mounted on the arm.

We designed the controller as
\begin{alignat*}{3}
    A &=
    \begin{bmatrix}
        0 & 0 & 0 \\
        0 & 0 & 0 \\
        k_3 & k_4 & k_5
    \end{bmatrix}, &\quad
    B &=
    \begin{bmatrix}
        -50 & 0 \\
        0 & -50 \\
        k_1 & k_2
    \end{bmatrix}, &\quad \\
    C &=
    \begin{bmatrix}
        k_3 & k_4 & k_5
    \end{bmatrix}, &\quad
    D &=
    \begin{bmatrix}
        k_1 & k_2
    \end{bmatrix}, &\quad
    x_0 &=
    \begin{bmatrix}
        0 & 0 & 0
    \end{bmatrix}^\top,
\end{alignat*}
where
\begin{alignat*}{2}
    k_1 &=  33.022125244140625,  &\ 
    k_2 &= -51.49261474609375,   \\
    k_3 &=   0.6586151123046875, &\ 
    k_4 &=  -0.7884063720703125, \\
    k_5 &=  -0.5822296142578125. &&
\end{alignat*}
The controller consists of pseudo-differentiators for estimating the angular velocities of the arm and the pendulum, denoted by $\dot{\theta}$ and $\dot{\alpha}$, respectively, and the state-feedback
\[
    u_t = k_1 \theta_t + k_2 \alpha_t + k_3 \dot{\theta}_t + k_4 \dot{\alpha}_t + k_5 u_{t-1}.
\]
Since all controller parameters are in $\QQ{23}{16}$, the bit lengths were set to $k = 64$ and $\ell = 32$ to satisfy \asmref{asm:fixed-point}.
Furthermore, the bit length of $q$ was set to $256$~bits, satisfying the lower bound of $248$~bits required by \thmref{thm:modulus-bits} with $c = 2.34$, $\gamma = 0.59$, and $\lambda = 80$.

Figs.~\ref{fig:u} and \subref{fig:alpha} show the control input and the pendulum angle in the experiment, respectively.
After activating \proref{pro:controller}, we manually moved the pendulum from the downright position to the upright position.
The automatic control began at approximately $3.5$~s, as indicated by the dashed line.
These results confirm that the proposed protocol successfully stabilizes the pendulum around its upright equilibrium using the commercial cloud platform even with rounding errors and the errors imposed by the truncation protocol.

\begin{figure}[t]
    \centering
    \subfigure[Input voltage.]{\includegraphics[scale=.8]{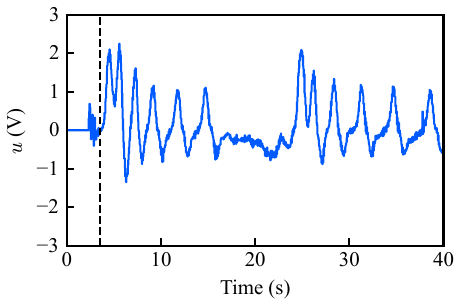}\label{fig:u}}
    \subfigure[Pendulum angle.]{\includegraphics[scale=.8]{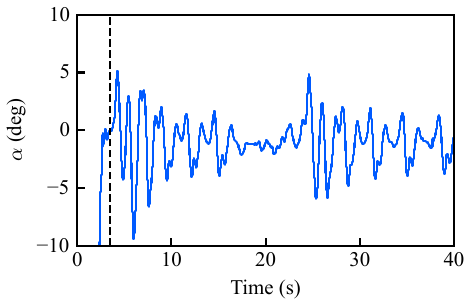}\label{fig:alpha}}
    \vspace{-.5em}
    \caption{Experimental results.}
    \label{fig:control}
\end{figure}

Next, we evaluate the processing times of the subprotocols.
For each protocol, we measured the round-trip time from when the client sent the input shares to when it received the output shares.
Each measurement was repeated 100~times, and the minimum, mean, and maximum times were recorded.
\figref{fig:mult-time} depicts the processing times of \proref{pro:mult} for dimensions $d_1 = d_2 = 10, 20, \dots, 100$.
Similarly, \figref{fig:trunc-time} depicts the processing times of \proref{pro:trunc} for dimensions $d_1 = 10, 20, \dots, 100$.
In both figures, the light green shaded areas represent the $99$~\% confidence intervals.
The results indicate that the processing times of both subprotocols remain on the order of milliseconds even for large dimensions.
Furthermore, although the processing time of \proref{pro:mult} increases with the dimensions, the processing time of \proref{pro:trunc} remains almost constant.
These findings suggest that the proposed protocol is applicable to systems with shorter sampling times, provided that the implementation of \proref{pro:mult} is optimized and wired communication is used to reduce the communication delays between the client and the parties.

\begin{figure}
    \centering
    \subfigure[\proref{pro:mult}.]{\includegraphics[scale=.8]{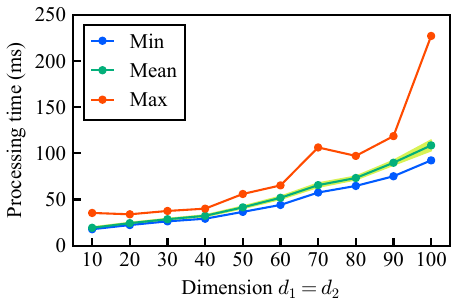}\label{fig:mult-time}}
    \subfigure[\proref{pro:trunc}.]{\includegraphics[scale=.8]{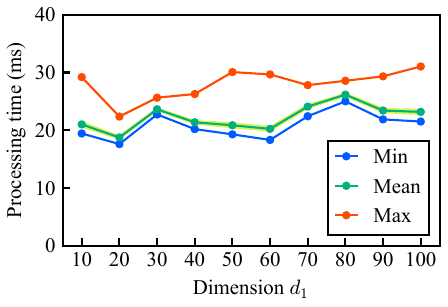}\label{fig:trunc-time}}
    \vspace{-.5em}
    \caption{Processing times of the subprotocols.}
    \label{fig:time}
\end{figure}

\section{Conclusion}\label{sec:conclusion}

We presented an experimental evaluation of a secure two-party computation protocol for dynamic controllers.
The protocol was implemented on a commercial cloud platform and validated using an inverted pendulum testbed.
The experimental results demonstrated that the protocol successfully stabilized the pendulum despite the online communication overhead between the computing parties.
Furthermore, the processing times of the multiplication and truncation subprotocols remained on the order of milliseconds even for large input sizes, with the primary bottleneck arising from the multiplication protocol.
These results confirm the real-time feasibility of the protocol under the experimental conditions.
Future work includes optimizing the protocol implementation in C/C++ and evaluating the achievable control performance in practical systems.

\appendix
\section{Proof of \thmref{thm:modulus-bits}}\label{app:proof}

Note first that the constants $c$ and $\gamma$ exist if $A$ is Schur stable, and they satisfy $c / (1 - \gamma) > 1$.
Let $\kappa = \floor{\log_2 q} - \lambda - 1$, then $\ZZ{\kappa} \subset \Z_q$ and $\kappa \ge 2 k + (k - \ell) + \log_2 f(n, p, c, \gamma) > \ell$, where $k - \ell \ge 1$.
If $\bar{A} \bar{x}_t + \bar{B}_t \bar{y}_t \in \ZZ{\kappa}^n$ and $\bar{C} \bar{x}_t + \bar{D}_t \bar{y}_t \in \ZZ{\kappa}^p$, \lemref{lem:trunc} implies that
\begin{align*}
    \bar{x}_{t+1}
    &= \Reconst(\Trunc(\share{\tilde{x}_{t+1}}, \ell))
    = \round{ A \bar{x}_t + B \bar{y}_t } + w_t, \\
    \bar{u}_t
    &= \Reconst(\share{\bar{u}_t})
    = \bar{C} \bar{x}_t + \bar{D} \bar{y}_t,
\end{align*}
where $\share{\tilde{x}_{t+1}} = \share{\bar{A}} \share{\bar{x}_t} + \share{\bar{B}} \share{\bar{y}_t} = \share{\bar{A} \bar{x}_t + \bar{B} \bar{y}_t}$ and $\share{\bar{u}_t} = \share{\bar{C}} \share{\bar{x}_t} + \share{\bar{D}} \share{\bar{y}_t} = \share{\bar{C} \bar{x}_t + \bar{D} \bar{y}_t}$ hold since $\ZZ{\kappa} \subset \Z_q$, and $w_t \in \{-1, 0, 1\}^n$.
Hence, the claim holds if $\bar{A} \bar{x}_t + \bar{B} \bar{y}_t \in \ZZ{\kappa}^n$ and $\bar{C} \bar{x}_t + \bar{D} \bar{y}_t \in \ZZ{\kappa}^p$ for all $t$.

For $t = 0$, it follows that
\begin{align*}
    & \max\qty{ \norm{\bar{A} \bar{x}_0 + \bar{B} \bar{y}_0}_\infty, \norm{\bar{C} \bar{x}_0 + \bar{D} \bar{y}_0}_\infty } \\
    &\le 2^{k - 1} \max\qty{ n, p } (\norm{\bar{x}_0}_\infty + \norm{\bar{y}_0}_\infty), \\
    &\le 2^{2k - 1} \max\{n, p\}, \\
    &< 2^{2k + (k - \ell) - 2} f(n, p, c, \gamma) = 2^{\kappa - 2} < 2^{\kappa - 1} - 1,
\end{align*}
which means $\bar{A} \bar{x}_0 + \bar{B} \bar{y}_0 \in \ZZ{\kappa}^n$ and $\bar{C} \bar{x}_0 + \bar{D} \bar{y}_0 \in \ZZ{\kappa}^p$.
Here, $\norm{\cdot}_\infty$ denotes the infinity norm.

Suppose that $\bar{A} \bar{x}_t + \bar{B} \bar{y}_t \in \ZZ{\kappa}^n$ and $\bar{C} \bar{x}_t + \bar{D} \bar{y}_t \in \ZZ{\kappa}^p$ hold for all $t = 0, \dots, \tau - 1$ with $\tau \in \N$.
This yields the following expression for $\bar{x}_\tau$,
\begin{align*}
    \bar{x}_\tau
    &= \round{ A \bar{x}_{\tau - 1} + B \bar{y}_{\tau - 1}} + w_{\tau - 1}, \\
    &= A^\tau \bar{x}_0 + \sum_{s=0}^{\tau - 1} A^s B \bar{y}_{\tau - 1 - s} + \sum_{s=0}^{\tau - 1} A^s (e_{\tau - 1 - s} + w_{\tau - 1 - s}),
\end{align*}
where $e_{\tau - 1 - s} \in (-1/2, 1/2]^n$.
It then holds that
\begin{align*}
    \norm{\bar{x}_\tau}_\infty
    &\le \qty( 2^{k - 1} + 2^{k - \ell - 1} p \cdot 2^{k - 1} + \frac{3}{2} ) \norm{ \sum_{s=0}^\tau A^s }_\infty, \\
    &< 2^{2(k - 1) - \ell} (1 + p) \sqrt{n} \frac{ c }{ 1 - \gamma },
\end{align*}
where we used $k - \ell \ge 2$ and $\norm{ \sum_{s=0}^\tau A^s }_\infty \le \sqrt{n} \norm{ \sum_{s=0}^\tau A^s } \allowbreak < \sqrt{n} \sum_{s=0}^\infty \norm{ A^s } = c \sqrt{n} / (1 - \gamma)$.
This implies that
\begin{align*}
    & \max\qty{ \norm{\bar{A} \bar{x}_\tau + \bar{B} \bar{y}_\tau}_\infty, \norm{\bar{C} \bar{x}_\tau + \bar{D} \bar{y}_\tau}_\infty } \\
    &\le 2^{k - 1} \max\qty{ n, p } (\norm{\bar{x}_\tau}_\infty + \norm{\bar{y}_\tau}_\infty), \\
    &< 2^{k - 1} \max\{n, p\} \qty( 2^{2(k - 1) - \ell} (1 + p) \sqrt{n} \frac{ c }{ 1 - \gamma } + 2^{k - 1} ), \\
    &< 2^{2k + (k - \ell) - 2} f(n, p, c, \gamma) < 2^{\kappa - 1} - 1,
\end{align*}
which concludes $\bar{A} \bar{x}_\tau + \bar{B} \bar{y}_\tau \in \ZZ{\kappa}^n$ and $\bar{C} \bar{x}_\tau + \bar{D} \bar{y}_\tau \in \ZZ{\kappa}^p$.

\bibliography{ifacconf}

\end{document}